\renewcommand\paragraph[1]{\vspace{0.2em}\noindent\textbf{#1.}}
\def\MONIC{\textsc{Monic}\xspace}
\def\MOVEC{\textsc{Movec}\xspace}
\newcommand*{\buchi}{B\"{u}chi\xspace}
\newcommand*{\X}{\emph{\textbf{X}}}
\newcommand*{\U}{\emph{\textbf{U}}}
\newcommand*{\R}{\emph{\textbf{R}}}
\newcommand*{\F}{\emph{\textbf{F}}}
\newcommand*{\G}{\emph{\textbf{G}}}
\newcommand\POSIMON{M_\top}
\newcommand\NEGAMON{M_\bot}
\newcommand\NEUTMON{M_+}
\newcommand\NONMON{M_\times}
\newcommand\vtom[1]{\texttt{vtom}(#1)}
\title{Four-valued monitorability of $\omega$-regular languages} 
\titlerunning{Four-valued monitorability of $\omega$-regular languages}
\author{Zhe Chen}
{College of Computer Sci. and Tech., 
Nanjing University of Aeronautics and Astronautics, China}
{zhechen@nuaa.edu.cn}
{}
{This work is supported by ...}
\author{Yunyun Chen}
{College of Computer Sci. and Tech., 
Nanjing University of Aeronautics and Astronautics, China}
{}
{}
{}
\author{Robert M. Hierons}
{Department of Computer Science, 
The University of Sheffield, UK}
{r.hierons@sheffield.ac.uk}
{}
{This work is supported by ...}
\author{Yifan Wu}
{College of Computer Sci. and Tech., 
Nanjing University of Aeronautics and Astronautics, China}
{}
{}
{}
\authorrunning{Z. Chen, Y. Chen, R.\,M. Hierons and Y. Wu}
\keywords{Monitorability, $\omega$-languages, multi-valued logics, linear temporal logic, runtime verification}
\begin{document}

\maketitle

\begin{abstract}
Runtime Verification (RV) is a lightweight formal technique in which program or system execution is monitored and analyzed, to check whether certain properties are satisfied or violated after a finite number of steps.
The use of RV has led to interest in deciding whether a property is \emph{monitorable}: whether it is always possible for the satisfaction or violation of the property to be determined after a finite future continuation.
However, classical two-valued monitorability suffers from two inherent limitations.
First, a property can only be evaluated as monitorable or non-monitorable; no information is available regarding whether only one verdict (satisfaction or violation) can be detected.
As a result, the developer may write unnecessary handlers for inactive verdicts, increasing effort and runtime overhead.
Second, monitorability is defined at the \emph{language-level} and does not tell us whether satisfaction or violation can be detected starting from the current monitor \emph{state} during system execution.
As a result, every monitor object must be maintained during the entire execution, again increasing runtime overhead.

To address these limitations, this paper proposes a new notion of four-valued monitorability for $\omega$-languages and applies it at the state-level.
Four-valued monitorability is more informative than two-valued monitorability as a property can be evaluated as a four-valued result, denoting that only satisfaction, only violation, or both are active for a monitorable property.
We can also compute state-level weak monitorability, i.e., whether satisfaction or violation can be detected starting from a given state in a monitor, which enables state-level optimizations of monitoring algorithms.
Based on a new six-valued semantics, we propose procedures for computing four-valued monitorability of $\omega$-regular languages, both at the language-level and at the state-level.
We have developed a new tool, \MONIC, that implements the proposed procedure for computing monitorability of LTL formulas.
We evaluated its effectiveness using a set of standard LTL formulas. Experimental results show that \MONIC can correctly, and quickly, report both two-valued and four-valued monitorability.
\end{abstract}

\newpage
\section{Introduction}
\label{Sec:intro}

Runtime Verification (RV) \cite{LeuS09,BarFF18,HavR18} is a lightweight formal technique in which program or system execution is monitored and analyzed.
RV uses information extracted from an execution to check whether certain properties are satisfied or violated after a finite number of steps, possibly leading to online responses, such as signaling notifications or alarms, logging, computing statistical information, profiling, and performing error protection or recovery.
In RV, properties are usually expressed using formalisms~\cite{HavR17} such as Linear Temporal Logic (LTL) formulas~\cite{Pnu77,ManP92,Dru00,BauLS11}, Nondeterministic \buchi Automata (NBAs), and $\omega$-regular expressions, which represent $\omega$-regular languages~\cite{AmoR05,Bau10}.
RV tools automatically synthesize monitors (i.e.,~code fragments) from formal specifications and then weave the code into the system through instrumentation~\cite{Gei01,HavR02,Hav08}.
The inserted code typically maintains a set of monitor objects that can detect property satisfaction or violation during system execution.
Such approaches have been extended to parametric RV, in which properties are checked over every parameter instance (i.e., a combination of parameter values) by maintaining a monitor object for every parameter instance~\cite{ChenR07,MerJGCR12,RosC12,ChenWZX16,Chen17,HavRT18}.

Figure \ref{Fig:ex_monitor_spec} shows a monitor specification, written in the \MOVEC language \cite{ChenWZX16}, for the parametric RV of an event-driven system that dispatches a variety of events (e.g., sensor status, keystrokes, program loadings etc.) to components (e.g., libraries, mobile apps, microservices etc.).
Similar specifications can be written for other tools such as JavaMOP~\cite{ChenR07,MerJGCR12} and TraceMatches~\cite{AllAC05,AvgTM07}.
This specification defines a parametric monitor, named \verb|priority|, which takes two parameters: a component ID \verb|c| and an event ID \verb|e| that should be instantiated with the values (i.e., actual arguments) generated by system execution.
The specification body begins with four actions, which extract information regarding function calls that occur during runtime:
\verb|r| records a component being registered to an event (it also creates a monitor object by instantiating the monitor parameters with the arguments of the call),
\verb|u| records an unregister,
\verb|b| records the broadcast of an event (the argument of the call) to all components,
and \verb|n| records a certain component being notified of a specific event.
This specification is used to monitor system execution to check whether the property, specified as LTL formula $\phi_1 := (r \wedge \F u) \rightarrow ((\neg b \wedge \neg u) \U n) \U u$, is satisfied or violated after a finite number of steps, i.e., any infinite future continuation makes the property satisfied or violated, respectively.
The property requires that if a component \verb|c| registers to an event \verb|e| and unregisters later, then before the unregister, the event \verb|e| cannot be broadcasted until \verb|c| has been notified (i.e., \verb|c| has a higher priority than unregistered components in the context of receiving \verb|e|).

In practice, if the satisfaction or violation of a property is detected by a monitor object then an associated handler (i.e., a piece of code) is automatically triggered to perform some online response~\cite{ChenWZX16,ChenR07,MerJGCR12}.
For example, Figure \ref{Fig:ex_monitor_spec} includes two handlers for the satisfaction (i.e., validation) and violation of the LTL formula: if the property is satisfied then a message is logged; if it is violated then an alarm is signaled and this prints the IDs of the component and the event. The two handlers may also be extended to more advanced operations, e.g., profiling and error recovery.

\begin{figure}[htb]
\small
\begin{verbatim}
monitor priority(c,e) {
  creation action r(c,e) after call(% reg_component(% %:c, % %:e));
  action u(c,e) after call(% unreg_component(% %:c, % %:e));
  action b(e) before execution(% broadcast(% %:e)); 
  action n(c,e) after execution(% notify(% %:c, % %:e));

  ltl: (r && <>u) -> ((!b && !u) U n) U u;
  @validation {
    log("Priority applied: component %lu registers to event %lu.\n",
        monitor->c, monitor->e);
  }
  @violation {
    printf("Priority violated: component %lu registers to event %lu.\n",
        monitor->c, monitor->e);
  }
};
\end{verbatim}
\vspace{-1em}
\caption{A monitor specification with an LTL formula.}
\vspace{-1em}
\label{Fig:ex_monitor_spec}
\end{figure}

We may also monitor the system against other properties, e.g., $\phi_2 := \F r \rightarrow \G \F n$ that a component should receive notifications infinitely often after its registration, $\phi_3 := r \rightarrow \F u$ that a component unregisters after its registration, and $\phi_4 := \G (r \rightarrow \neg u \U n)$ that a registered component receives at least one notification before its deregistration. The developer may also write handlers for the satisfaction and violation of each property.

When specifying properties, the developer is usually concerned with their monitorability~\cite{PnuZ06,BauLS11,Bau10,DieL14}, i.e., after any number of steps, whether the satisfaction or violation of the monitored property can still be detected after a finite future continuation.
When writing handlers for these properties, the developer might consider the following question: \emph{``Can the handlers for satisfaction and violation be triggered during system execution?''}
We say that a verdict and its handler are \emph{active} if there is some continuation that would lead to the verdict being detected and thus its handler being triggered.
This question can be partly answered by deciding monitorability (with the traditional two-valued notion).
For example, $\phi_2$ (above) is non-monitorable, i.e., there is some finite sequence of steps after which no verdict is active.
Worse, $\phi_2$ is also weakly non-monitorable \cite{ChenWW18}, i.e., no verdict can be detected after any number of steps.
Thus writing handlers for $\phi_2$ is a waste of time as they will never be triggered. More seriously, monitoring $\phi_2$ at runtime adds no value but increases runtime overhead. In contrast, $\phi_1$, $\phi_3$ and $\phi_4$ are monitorable, i.e., some verdicts are always active. Thus their handlers must be developed as they may be triggered.
However, this answer is still unsatisfactory, as the existing notion of monitorability suffers from two inherent limitations: \emph{limited informativeness} and \emph{coarse granularity}.

\paragraph{Limited informativeness}
The existing notion of monitorability is not sufficiently informative, as it is two-valued, i.e., a property can only be evaluated as monitorable or non-monitorable. This means, for a monitorable property, we only know that some verdicts are active, but no information is available regarding whether only one verdict (satisfaction or violation) is active.
As a result, the developer may still write unnecessary handlers for inactive verdicts.
For example, $\phi_1$, $\phi_3$ and $\phi_4$ are monitorable. We only know that at least one of satisfaction and violation is active, but this does not tell us which ones are active and thus which handlers are required.
As a result, the developer may waste time in handling inactive verdicts, e.g., the violation of $\phi_3$ and the satisfaction of $\phi_4$.
Thus, the existing answer is far from satisfactory.

Limited informativeness also weakens the support for property debugging.
For example, when writing a property the developer may expect that both verdicts are active but a mistake may lead to only one verdict being active. The converse is also the case.
Unfortunately, these kinds of errors cannot be revealed by two-valued monitorability, as the expected property and the written (erroneous) property are both monitorable.
For example, the developer may write formula $\phi_4$ while having in mind another one $\phi_5 := r \rightarrow \neg u \U n$, i.e., what she/he really wants is wrongly prefixed by one $\G$. These two formulas cannot be discriminated by deciding two-valued monitorability as both are monitorable.

\paragraph{Coarse granularity}
The existing notion of monitorability is defined at the language-level, i.e., a property can only be evaluated as monitorable or non-monitorable as a whole, rather than a notion for (more fine-grained) states in a monitor.
This means that we do not know whether satisfaction or violation can be detected \emph{starting from the current state} during system execution.
As a result, every monitor object must be maintained during the entire execution, again increasing runtime overhead.
For example, $\phi_6 := \G \F r \vee (\neg n \rightarrow \X \neg b)$ is weakly monitorable, thus all its monitor objects (i.e., instances of the Finite State Machine (FSM) in Figure \ref{Fig:ex_monitor}), created for every pair of component and event, are maintained.

\begin{wrapfigure}{r}{0.4\textwidth}
\vspace{-3em}\hspace{-2em}
\includegraphics[scale=0.5]{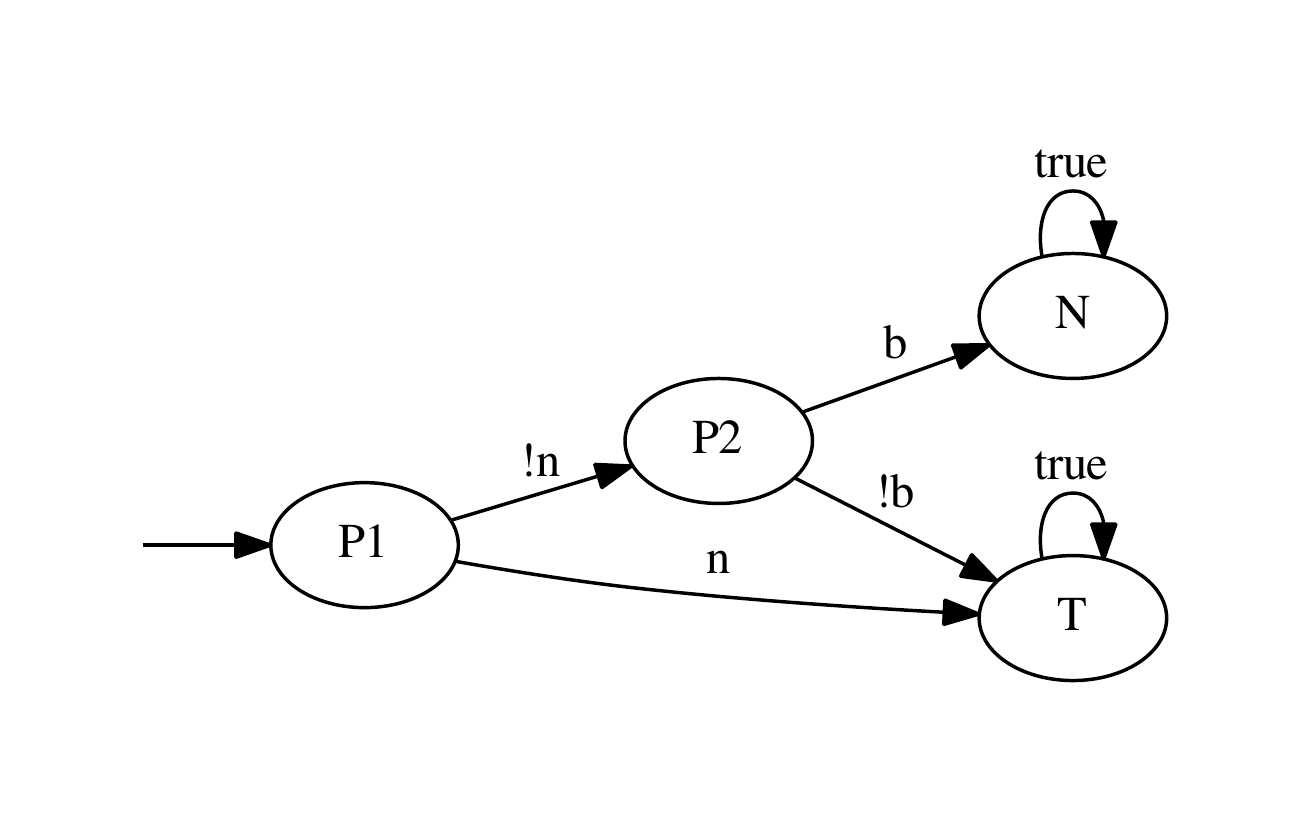}
\vspace{-4em}
\caption{A monitor for LTL formula $\phi_6 := \G \F r \vee (\neg n \rightarrow \X \neg b)$. Each transition is labeled with a propositional formula denoting a set of satisfying states. For example, ``!n'' denotes $\{ \emptyset, \{r\}, \{b\}, \{r,b\} \}$ and ``true'' denotes all states.}
\vspace{-1em}
\label{Fig:ex_monitor}
\end{wrapfigure}

Note that parametric runtime verification is NP-complete for detecting violations and coNP-complete for ensuring satisfaction~\cite{Chen17}. This high complexity primarily comes from the large number of monitor objects maintained for all parameter instances~\cite{MerJGCR12,ChenWZX16,Chen17}.
For state-level optimizations of monitoring algorithms, if no verdict can be detected starting from the current state of a monitor object, then the object can be switched off and safely removed to improve runtime performance.
For example, in Figure \ref{Fig:ex_monitor}, only satisfaction can be detected starting from states \verb|P1|, \verb|P2| and \verb|T|, whereas no verdict can be detected starting from \verb|N|. Thus a monitor object can be safely removed when it enters \verb|N|. Unfortunately, the existing notion does not support such optimizations.
 
\paragraph{Our Solution}
In this paper, we propose a new notion of four-valued monitorability for $\omega$-languages, and apply it at the state-level, overcoming the two limitations discussed above.
First, the proposed approach is more informative than two-valued monitorability. Indeed, a property can be evaluated as a four-valued result, denoting that \emph{only satisfaction, only violation, or both are active for a monitorable property}.
Thus, if satisfaction (resp. violation) is inactive, then writing handlers for satisfaction (resp. violation) is not required.
This can also enhance property debugging. For example, $\phi_4$ and $\phi_5$ can now be discriminated by their different monitorability results, as $\phi_4$ can never be satisfied but $\phi_5$ can be satisfied and can also be violated.
Thus, additional developer mistakes can be revealed.
Second, we can compute state-level weak monitorability, i.e., whether satisfaction or violation can be detected starting from a given state in a monitor.
For example, in Figure \ref{Fig:ex_monitor}, state \verb|N| is weakly non-monitorable, thus a monitor object can be safely removed when it enters state \verb|N|, which achieves a state-level optimization.

In summary, we make the following contributions.
\begin{itemize}
\item We propose a new notion of four-valued monitorability for $\omega$-languages (Section \ref{Sec:mon_4}), which provides more informative answers as to which verdicts are active.
This notion is defined using six types of prefixes, which complete the classification of finite sequences.
\item We propose a procedure for computing four-valued monitorability of $\omega$-regular languages, given in terms of LTL formulas, NBAs or $\omega$-regular expressions (Section \ref{Sec:mon_4:algorithm}), based on a new six-valued semantics.
\item We propose a new notion of state-level four-valued weak monitorability and its computation procedure for $\omega$-regular languages (Section \ref{Sec:mon_state}), which describes which verdicts are active for a state. This can enable state-level optimizations of monitoring algorithms.
\item We have developed a new tool, \MONIC, that implements the proposed procedure for computing monitorability of LTL formulas. We evaluated its effectiveness using a set of 97 LTL patterns and formulas $\phi_1$ to $\phi_6$ (above). Experimental results show that \MONIC can correctly report both two-valued and four-valued monitorability (Section \ref{Sec:experiments}).
\end{itemize}

\section{Preliminaries}

Let $AP$ be a non-empty finite set of \emph{atomic propositions}. A \emph{state} is a complete assignment of truth values to the propositions in $AP$. Let $\Sigma = 2^{AP}$ be a finite \emph{alphabet}, i.e., the set of all states. $\Sigma^*$ is the set of finite words (i.e., sequences of states in $\Sigma$), including the empty word $\epsilon$, and $\Sigma^\omega$ is the set of infinite words.
We denote atomic propositions by $p$, $q$, $r$, finite words by $u$, $v$, and infinite words by $w$, unless explicitly specified. We write a finite or infinite word in the form $\{p, q\}\{p\}\{q,r\}\cdots$, where a proposition appears in a state iff it is assigned true. We drop the brackets around singletons, i.e., $\{p, q\}p\{q,r\}\cdots$.

An \emph{$\omega$-language} (i.e., a linear-time infinitary property) $L$ is a set of infinite words over $\Sigma$, i.e., $L \subseteq \Sigma^\omega$.
Linear Temporal Logic (LTL) \cite{Pnu77,ManP92} is a typical representation of $\omega$-regular languages. LTL extends propositional logic, which uses \emph{boolean connectives} $\neg$ (not) and $\wedge$ (conjunction), by introducing \emph{temporal connectives} such as $\X$ (next), $\U$ (until), $\R$ (release), $\F$ (future, or eventually) and $\G$ (globally, or always).
Intuitively, $\X \phi$ says that $\phi$ holds at the next state, $\phi_1 \U \phi_2$ says that at some future state $\phi_2$ holds and before that state $\phi_1$ always holds.
Using the temporal connectives $\X$ and $\U$, the full power of LTL is obtained. For convenience, we also use some common abbreviations: \emph{true}, \emph{false}, standard boolean connectives $\phi_1 \vee \phi_2 \equiv \neg (\neg \phi_1 \wedge \neg \phi_2)$ and $\phi_1 \rightarrow \phi_2 \equiv \neg \phi_1 \vee \phi_2$, and additional temporal connectives $\phi_1 \R \phi_2 \equiv \neg (\neg \phi_1 \U \neg \phi_2)$ (the dual to $\U$), $\F \phi \equiv true \U \phi$ ($\phi$ eventually holds), and $\G \phi \equiv \neg \F \neg \phi$ ($\phi$ always holds).
We denote by $L(\phi)$ the $\omega$-language accepted by a formula $\phi$.

Let us recall the classification of prefixes that are used to define the three-valued semantics and two-valued monitorability of $\omega$-languages.
\begin{definition}[Good, bad and ugly prefixes \cite{KupV01, BauLS07}]
A finite word $u \in \Sigma^*$ is
a \emph{good prefix} for $L$ if $\forall w \in \Sigma^\omega.uw \in L$,
a \emph{bad prefix} for $L$ if $\forall w \in \Sigma^\omega.uw \not\in L$, or
an \emph{ugly prefix} for $L$ if no finite extension makes it good or bad, i.e., $\not\exists v\in \Sigma^*. \forall w \in \Sigma^\omega.uvw \in L$ and $\not\exists v\in \Sigma^*. \forall w \in \Sigma^\omega.uvw \not\in L$.
\end{definition}
In other words, good and bad prefixes \emph{satisfy} and \emph{violate} an $\omega$-language in some finite number of steps, respectively.
We denote by $good(L)$, $bad(L)$ and $ugly(L)$ the set of good, bad and ugly prefixes for $L$, respectively.
Note that they do not constitute a complete classification of finite words. For example, any finite word of the form $p \cdots p$ is neither a good nor a bad prefix for $p \U q$, and also is not an ugly prefix as it can be extended to a good prefix (ended with $q$) or a bad prefix (ended with $\emptyset$).

\begin{definition}[Three-valued semantics \cite{BauLS11}]
\label{Def:val_3}
Let $\mathbb{B}_3$ be the set of three truth values: \emph{true} $\top$, \emph{false} $\bot$ and \emph{inconclusive} $?$.
The truth value of an $\omega$-language $L \subseteq \Sigma^\omega$ wrt. a finite word $u \in \Sigma^*$, denoted by $[u \models L]_3$, is $\top$ or $\bot$ if $u$ is a good or bad prefix for $L$, respectively, and $?$ otherwise.
\end{definition}
Note that the inconclusive value does not correspond to ugly prefixes. Although an ugly prefix always leads to the inconclusive value, the converse does not hold. For example, $[p \cdots p \models L(p \U q)]_3$ = $?$ but $p \cdots p$ is not an ugly prefix.

Bauer et al. \cite{BauLS11} presented a monitor construction procedure that transforms an LTL formula $\phi$ into a three-valued monitor, i.e., a deterministic FSM that contains $\top$, $\bot$ and $?$ states, which output $\top$, $\bot$ and $?$ after reading over good, bad and other prefixes respectively.
For example, in Figure \ref{Fig:ex_monitor}, state \verb|T| is a $\top$ state, whereas the remaining states are all $?$ states.

The construction procedure first creates two NBAs for $\phi$ and $\neg \phi$. For each NBA, a state $q$ is marked by $\top$ if the language of the NBA starting in $q$ is not empty. The two NBAs are then converted into two Nondeterministic Finite Automata~(NFAs) whose accepting states are those marked by $\top$. The two NFAs are subsequently converted into two equivalent Deterministic Finite Automata~(DFAs) $A_{\phi}$ and $A_{\neg \phi}$ using a standard determinization, e.g., the power-set construction. The procedure finally obtains the deterministic FSM by computing and minimizing the product of the two DFAs. A state of the FSM outputs $\top$ if it does not contain an accepting state of $A_{\neg \phi}$, $\bot$ if it does not contain an accepting state of $A_{\phi}$, or $?$ otherwise. Note that each NBA is exponentially larger than the corresponding formula, and each DFA is exponentially larger than the corresponding NBA. Thus, this construction procedure requires 2ExpSpace.
This construction procedure can be adapted to construct monitors for the $\omega$-regular languages specified as NBAs and $\omega$-regular expressions with the same complexity.

It has been shown that the three-valued monitor can be used to compute the truth value of an $\omega$-language wrt. a finite word \cite{BauLS11}, which is the output of the corresponding monitor after reading over this word.
\begin{lemma}
\label{Lem:monitor_3}
Let $M = (Q$, $\Sigma$, $\delta$, $q_0$, $\mathbb{B}_3$, $\lambda_3)$ be a three-valued monitor for an $\omega$-language $L \subseteq \Sigma^\omega$, where $Q$ is a finite set of \emph{states}, $\Sigma$ is a finite \emph{alphabet}, $\delta: Q \times \Sigma \mapsto Q$ is a \emph{transition function}, $q_0 \in Q$ is an \emph{initial state}, $\mathbb{B}_3$ is an \emph{output alphabet} and $\lambda_3: Q \rightarrow \mathbb{B}_3$ is an \emph{output function}.
For any $u \in \Sigma^*$, $[u \models L]_3 = \lambda_3(\delta(q_0, u))$.
\end{lemma}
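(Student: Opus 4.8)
The plan is to unfold the monitor construction of Bauer et al.~\cite{BauLS11} recalled above and check that the $\top/\bot/{?}$ labelling of the resulting product FSM coincides, state by state as reached by a finite word $u$, with the good/bad/other classification underlying Definition~\ref{Def:val_3}. Throughout I extend $\delta$ to $\Sigma^*$ in the usual way, $\delta(q,\epsilon)=q$ and $\delta(q,ua)=\delta(\delta(q,u),a)$, so that $\delta(q_0,u)$ denotes the state reached after reading $u$; this is the reading implicit in the statement. Since the monitor $M$ is (by definition) the output of this construction applied to $L=L(\phi)$, establishing the lemma amounts to verifying that the construction's labelling rule realises exactly the good/bad/other distinction.

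The technical core, and the step I expect to be the main obstacle, is characterising the finite-word languages of the two deterministic automata $A_\phi$ and $A_{\neg\phi}$ from which the monitor is built. Tracing the construction: a state of the NBA for $\phi$ is marked by $\top$ exactly when the $\omega$-language accepted from it is non-empty; reinterpreting the NBA with the $\top$-marked states as accepting states yields an NFA over finite words whose accepted words are those admitting a run into a state from which an accepting infinite continuation exists; and determinisation preserves this finite-word language. Hence $A_\phi$ accepts $u$ iff $\exists w\in\Sigma^\omega.\, uw\in L(\phi)=L$, and symmetrically $A_{\neg\phi}$ accepts $u$ iff $\exists w\in\Sigma^\omega.\, uw\notin L$. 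The delicate points are the equivalence between ``$u$ is a finite prefix of some accepted $\omega$-word'' and ``$u$ admits a run into a $\top$-marked NBA state'', and the fact that each translation step (NBA$\to$NFA, NFA$\to$DFA, product, minimisation) preserves both the recognised finite-word language and, for minimisation, the output behaviour of every reachable state.

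Granting this, the rest is bookkeeping. Before minimisation, a routine induction on $|u|$ shows that the FSM state reached on $u$ is the pair consisting of the state $A_\phi$ reaches on $u$ and the state $A_{\neg\phi}$ reaches on $u$; its label is $\top$ iff the $A_{\neg\phi}$-component is non-accepting, $\bot$ iff the $A_\phi$-component is non-accepting, and $?$ otherwise; and minimisation merges only equally-labelled states, so the label of the state reached on $u$ is unchanged. Therefore $\lambda_3(\delta(q_0,u))=\top$ iff $u\notin L(A_{\neg\phi})$ iff $\neg\exists w.\, uw\notin L$ iff $\forall w\in\Sigma^\omega.\, uw\in L$ iff $u\in good(L)$; by the same computation with $\phi$ and $\neg\phi$ swapped, $\lambda_3(\delta(q_0,u))=\bot$ iff $u\in bad(L)$. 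These two conditions are mutually exclusive because $\Sigma^\omega\neq\emptyset$, so the labelling rule is well-defined, and in the remaining case $\lambda_3$ outputs $?$. Comparing with Definition~\ref{Def:val_3}, which sets $[u\models L]_3$ to $\top$, $\bot$, $?$ according to precisely these three cases, gives $[u\models L]_3=\lambda_3(\delta(q_0,u))$ for every $u\in\Sigma^*$.

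Finally, for the case where $L$ is presented as an NBA or an $\omega$-regular expression rather than an LTL formula, the same argument applies verbatim, starting from the given NBA for $L$ and a complementary NBA for $\Sigma^\omega\setminus L$ in place of the NBAs for $\phi$ and $\neg\phi$.
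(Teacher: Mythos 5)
Your proof is correct: the paper itself offers no proof of this lemma, recalling it as a known result of Bauer et al.~\cite{BauLS11}, and your argument is a faithful reconstruction of exactly the construction the paper describes (marking NBA states whose $\omega$-language is non-empty, determinising, taking the product, and labelling a product state by the accepting status of its two components). The key step---that $A_\phi$ accepts $u$ iff $\exists w\in\Sigma^\omega.\,uw\in L$ and symmetrically for $A_{\neg\phi}$, so that the $\top$, $\bot$ and $?$ labels coincide with the good, bad and other cases of Definition~\ref{Def:val_3}---is argued correctly, including the mutual exclusivity of the $\top$ and $\bot$ cases and the preservation of labels under minimisation.
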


\begin{definition}[Two-valued monitorability \cite{PnuZ06,BauLS11,Bau10}]
\label{Def:mon}
An $\omega$-language $L \subseteq \Sigma^\omega$ is
\begin{itemize}
\item \emph{$u$-monitorable} for $u \in \Sigma^*$, if $\exists v \in \Sigma^*$, s.t. $uv$ is a good or bad prefix.
\item \emph{monitorable} if it is $u$-monitorable for every $u \in \Sigma^*$.
\end{itemize}
\end{definition}
In other words, $L$ is \emph{$u$-monitorable} if $u$ has a \emph{good} or \emph{bad extension}. $L$ is \emph{monitorable} if every finite word has a good or bad extension. 
Note that an ugly prefix can never be extended to a good or bad prefix. Thus, $L$ is \emph{non-monitorable} iff there exists an ugly prefix for $L$.

\section{Four-valued monitorability}
\label{Sec:mon_4}

In this section, we propose a new notion of four-valued monitorability, to provide more informative answers to monitorability checking. As we promised, it can indicate whether only satisfaction, only violation, or both are active for a monitorable property.
Two-valued monitorability cannot achieve this because its definition only requires that all finite words (i.e., $u$ in Definition \ref{Def:mon}) can be extended to good or bad prefixes (which witness satisfaction or violation, respectively), but does not discriminate them on the types and number of the verdicts that the extensions of each finite word can witness.
To address this limitation, our approach aims to discriminate accordingly these finite words by inspecting which types of prefixes they can be extended to.

To achieve this objective, we first need to propose a new classification of prefixes, as the traditional classification (as the good, the bad and the ugly) is not satisfactory due to incompleteness, i.e., it does not include the finite words that are neither good nor bad but can be extended to good or bad prefixes. Thus we introduce the notions of positive, negative and neutral prefixes, in addition to good, bad and ugly prefixes, to complete the classification.

\begin{definition}[Positive, negative and neutral prefixes]
A finite word $u \in \Sigma^*$ is a
\begin{itemize}
\item \emph{positive prefix} for $L$ if it is not good, but some finite extension makes it good but never bad, i.e., $\exists w \in \Sigma^\omega. uw \not\in L$, $\exists v \in \Sigma^*. \forall w \in \Sigma^\omega. uvw \in L$, and $\not\exists v \in \Sigma^*. \forall w\in \Sigma^\omega. uvw \not\in L$,
\item \emph{negative prefix} for $L$ if it is not bad, but some finite extension makes it bad but never good, i.e., $\exists w \in \Sigma^\omega. uw \in L$, $\exists v \in \Sigma^*. \forall w \in \Sigma^\omega. uvw \not\in L$, and $\not\exists v \in \Sigma^*. \forall w\in \Sigma^\omega. uvw \in L$, or
\item \emph{neutral prefix} for $L$ if some finite extension makes it good and some makes it bad, i.e., $\exists v\in \Sigma^*. \forall w \in \Sigma^\omega.uvw \in L$ and $\exists v\in \Sigma^*. \forall w \in \Sigma^\omega.uvw \not\in L$.
\end{itemize}
\end{definition}
We denote by $posi(L)$, $nega(L)$ and $neut(L)$ the set of positive, negative and neutral prefixes for $L$, respectively. It is easy to see that the three new sets of prefixes and the three traditional sets of good, bad and ugly prefixes are mutually disjoint.
An interesting fact, as shown by the following theorem, is that the six sets of prefixes exactly constitute the complete set of finite words.
Furthermore, the six types of prefixes directly correspond to the six-valued semantics (cf. Definition \ref{Def:val_6}). This completes the classification of prefixes.
\begin{theorem}
$good(L) \cup bad(L) \cup posi({L}) \cup nega({L}) \cup neut({L}) \cup ugly(L) = \Sigma^*$.
\end{theorem}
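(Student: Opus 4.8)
The plan is to show that the six sets are pairwise disjoint and that their union exhausts $\Sigma^*$; since disjointness is already asserted to be easy, the real content is the covering claim. I would argue by a case analysis on a single finite word $u \in \Sigma^*$, organized around two boolean parameters: whether $u$ has a good extension (i.e. $\exists v.\ uv$ is good), and whether $u$ has a bad extension (i.e. $\exists v.\ uv$ is bad). Note that ``$u$ is itself good'' is the special case $v = \epsilon$, and likewise for bad, so these two parameters already encompass the good/bad cases. This gives four top-level cases.

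In the case where $u$ has neither a good extension nor a bad extension, $u$ is by definition an ugly prefix, so $u \in ugly(L)$. In the case where $u$ has both a good extension and a bad extension, $u$ satisfies the defining condition of a neutral prefix, so $u \in neut(L)$. The two mixed cases are symmetric, so I would treat one in detail. Suppose $u$ has a good extension but no bad extension. Then sub-divide according to whether $u$ is itself good: if $u$ is good, then $u \in good(L)$; if $u$ is not good, then $u$ is not good, some finite extension makes it good, and (since there is no bad extension) no finite extension makes it bad — these are exactly the three clauses defining a positive prefix, so $u \in posi(L)$. The symmetric argument with ``good'' and ``bad'' interchanged handles the case where $u$ has a bad extension but no good extension, yielding $u \in bad(L) \cup nega(L)$. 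Combining the four cases, every $u \in \Sigma^*$ lies in one of the six sets, which together with the disjointness observation gives the claimed partition and in particular the stated equality.

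The one subtlety worth pinning down — and the place where a careless argument could go wrong — is the interaction between ``$u$ has a good extension'' and ``$u$ is itself good,'' and in particular the observation that $good(L) \subseteq \{u : \exists v.\ uv \text{ is good}\}$ trivially (take $v = \epsilon$), but also that the definitions of $posi(L)$ and $nega(L)$ are stated so as to explicitly exclude the good/bad words, so there is no double-counting. I should also confirm, for disjointness, that a positive prefix cannot be neutral (a positive prefix has no bad extension, whereas a neutral prefix does) and cannot be negative (a positive prefix has a good extension, whereas a negative prefix has none), and that none of the new three classes meets $good(L)$, $bad(L)$, or $ugly(L)$ — each of these is immediate from one of the explicit clauses. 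I do not expect any genuine obstacle here; the proof is essentially a bookkeeping exercise over the $2 \times 2$ table of (has good extension, has bad extension), refined once along the (is good) / (is bad) axis, and the main care is simply to quote the definitions verbatim so the boundary cases land in exactly one bucket.
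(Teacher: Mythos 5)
Your proof is correct: the paper states this theorem without giving an explicit proof, and your $2\times 2$ case analysis on (has a good extension, has a bad extension), refined by whether $u$ is itself good or bad, is exactly the natural argument, with each case landing verbatim on one of the six definitions. The one subtlety you flag --- that a good (resp.\ bad) prefix trivially has a good (resp.\ bad) extension via $v=\epsilon$, while $posi(L)$ and $nega(L)$ explicitly exclude good and bad words --- is indeed the only point requiring care, and you handle it properly.
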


The traditional three-valued semantics can identify only good and bad prefixes with the truth values $\top$ and $\bot$ respectively, whereas all the prefixes of the other four types are given the same value $?$. To discriminate them, we further divide the value $?$ into four truth values.

\begin{definition}[Six-valued semantics]
\label{Def:val_6}
Let $\mathbb{B}_6$ be the set of six truth values: \emph{true} $\top$, \emph{false} $\bot$, \emph{possibly true} $\mp$, \emph{possibly false} $\pm$, \emph{possibly conclusive} $+$ and \emph{inconclusive} $\times$.
The truth value of an $\omega$-language $L \subseteq \Sigma^*$ wrt. a finite word $u \in \Sigma^*$, denoted by $[u \models L]_6$, is $\top$, $\bot$, $\mp$, $\pm$, $+$ or $\times$ if $u$ is a good, bad, positive, negative, neutral or ugly prefix for $L$, respectively.
\end{definition}
Note that the six-valued semantics models a rigorous correspondence between truth values and prefix types. Unlike the three-valued semantics, the inconclusive value now exactly corresponds to ugly prefixes.

The six-valued semantics is closely related to the types of extensions. That is, an $\omega$-language is evaluated wrt. a finite word, which is neither good nor bad, as \emph{possibly true} if the word can be extended to a good prefix but never a bad prefix, \emph{possibly false} if the word can be extended to a bad prefix but never a good prefix, \emph{possibly conclusive} if the word can be extended to both good and bad prefixes, or \emph{inconclusive} if neither.
Thus, the six-valued semantics can be used to evaluate whether a finite word can be extended to witness the satisfaction and violation of an $\omega$-language: Satisfaction is possible iff the $\omega$-language is evaluated wrt. the finite word as true, possibly true or possibly conclusive, while violation is possible iff the $\omega$-language is evaluated wrt. the finite word as false, possibly false or possibly conclusive.

The definition of four-valued monitorability is built on the following notion of four-valued $u$-monitorability which is used to discriminate finite words by inspecting which types of prefixes they can be extended to.
\begin{definition}[Four-valued $u$-monitorability]
\label{Def:umon_4}
An $\omega$-language $L \subseteq \Sigma^\omega$ is
\begin{itemize}
\item \emph{weakly positively $u$-monitorable} for $u \in \Sigma^*$, if $\exists v \in \Sigma^*$, s.t. $uv$ is a good prefix.
\item \emph{weakly negatively $u$-monitorable} for $u \in \Sigma^*$, if $\exists v \in \Sigma^*$, s.t. $uv$ is a bad prefix.
\item \emph{positively $u$-monitorable} if it is weakly positively, but not weakly negatively, $u$-monitorable. ($u$ has only good extensions, thus $u$ is a good or positive prefix.)
\item \emph{negatively $u$-monitorable} if it is weakly negatively, but not weakly positively, $u$-monitorable. ($u$ has only bad extensions, thus $u$ is a bad or negative prefix.)
\item \emph{neutrally $u$-monitorable} if it is both weakly positively and weakly negatively $u$-monitorable. ($u$ has both good and bad extensions, thus $u$ is a neutral prefix.)
\item \emph{not $u$-monitorable} if it is neither weakly positively nor weakly negatively $u$-monitorable. ($u$ has neither good nor bad extension, thus $u$ is an ugly prefix.)
\end{itemize}
\end{definition}
In other words, the traditional $u$-monitorability is split into two parts, i.e., weakly positive and weakly negative $u$-monitorability.
As a result, $L$ is $u$-monitorable iff $L$ is positively, negatively or neutrally $u$-monitorable.

\begin{definition}[Four-valued monitorability]
\label{Def:mon_4}
An $\omega$-language $L \subseteq \Sigma^\omega$ is
\begin{itemize}
\item \emph{positively monitorable} if it is positively $u$-monitorable for every $u \in \Sigma^*$.
\item \emph{negatively monitorable} if it is negatively $u$-monitorable for every $u \in \Sigma^*$.
\item \emph{neutrally monitorable} if it is $u$-monitorable for every $u \in \Sigma^*$, and is neutrally $\epsilon$-monitorable for the empty word $\epsilon$.
\item \emph{non-monitorable} if it is not $u$-monitorable for some $u\in \Sigma^*$.
\end{itemize}
\end{definition}
In other words, the set of monitorable $\omega$-languages is divided into three classes, i.e., positively, negatively and neutrally monitorable ones.
Note that the definition of neutral monitorability consists of two conditions, of which the first ensures that $L$ is monitorable while the second ensures that both of satisfaction and violation can be detected after some finite sequences of steps.
We denote the four truth values (positively, negatively, neutrally and non-monitorable) by $\POSIMON$, $\NEGAMON$, $\NEUTMON$ and $\NONMON$, respectively.

We can validate that four-valued monitorability indeed provides the informativeness we require, as described in Section \ref{Sec:intro}, by showing the following theorem, that the truth values $\POSIMON$, $\NEGAMON$, and $\NEUTMON$ indicate that only satisfaction, only violation, and both can be detected after some finite sequences of steps, respectively.
This theorem can be proved by Definitions \ref{Def:mon_4} and \ref{Def:umon_4}, in which $u$ is substituted by the empty word $\epsilon$.
\begin{theorem}
If an $\omega$-language $L \subseteq \Sigma^\omega$ is
\begin{itemize}
\item positively monitorable then $\exists u \in \Sigma^*. \forall w \in \Sigma^\omega. uw \in L$ and $\not\exists u \in \Sigma^*. \forall w \in \Sigma^\omega. uw \not\in L$.
\item negatively monitorable then $\exists u \in \Sigma^*. \forall w \in \Sigma^\omega. uw \not\in L$ and $\not\exists u \in \Sigma^*. \forall w \in \Sigma^\omega. uw \in L$.
\item neutrally monitorable then $\exists u \in \Sigma^*. \forall w \in \Sigma^\omega. uw \in L$ and $\exists u \in \Sigma^*. \forall w \in \Sigma^\omega. uw \not\in L$.
\end{itemize}
\end{theorem}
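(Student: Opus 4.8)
The plan is to reduce each of the three implications to a direct unfolding of Definitions~\ref{Def:mon_4} and~\ref{Def:umon_4} at the empty word $\epsilon$, using only the trivial identity $\epsilon v = v$ for every $v \in \Sigma^*$ together with the definitions of good and bad prefixes.

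First I would treat the positively monitorable case. By Definition~\ref{Def:mon_4}, if $L$ is positively monitorable then $L$ is positively $u$-monitorable for every $u \in \Sigma^*$; I instantiate this with $u := \epsilon$. By Definition~\ref{Def:umon_4}, being positively $\epsilon$-monitorable means $L$ is weakly positively $\epsilon$-monitorable but not weakly negatively $\epsilon$-monitorable. The first conjunct says $\exists v \in \Sigma^*$ such that $\epsilon v$ is a good prefix; since $\epsilon v = v$ and ``$v$ is a good prefix'' unfolds to $\forall w \in \Sigma^\omega.\, vw \in L$, this is exactly the claim $\exists u \in \Sigma^*.\, \forall w \in \Sigma^\omega.\, uw \in L$. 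The second conjunct says there is no $v \in \Sigma^*$ such that $\epsilon v$ is a bad prefix, i.e., $\not\exists u \in \Sigma^*.\, \forall w \in \Sigma^\omega.\, uw \not\in L$, which is the remaining part of the claim. The negatively monitorable case is the exact dual: I instantiate the definition of negatively $\epsilon$-monitorable, swap the roles of good/satisfy and bad/violate, and read off the two conjuncts in the same way.

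For the neutrally monitorable case I would exploit the fact that Definition~\ref{Def:mon_4} already carries, as its second condition, the requirement that $L$ be neutrally $\epsilon$-monitorable, so there is no need to route through $u$-monitorability at arbitrary $u$. By Definition~\ref{Def:umon_4}, neutrally $\epsilon$-monitorable means $L$ is both weakly positively and weakly negatively $\epsilon$-monitorable; unfolding these two witnesses (again with $\epsilon v = v$) yields $\exists u \in \Sigma^*.\, \forall w \in \Sigma^\omega.\, uw \in L$ and $\exists u \in \Sigma^*.\, \forall w \in \Sigma^\omega.\, uw \not\in L$, respectively, which are precisely the two required conjuncts.

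I do not expect any step to present a genuine obstacle; the argument is essentially a mechanical substitution of $\epsilon$ followed by the observation that a good (resp.\ bad) prefix of the form $\epsilon v$ is the same as a good (resp.\ bad) prefix $v$. The only point deserving care is, in the neutral case, to invoke the explicitly stated $\epsilon$-condition from Definition~\ref{Def:mon_4} rather than attempting to derive neutrality at $\epsilon$ from $u$-monitorability at all $u$, which on its own would not be strong enough to guarantee that \emph{both} a good and a bad prefix exist.
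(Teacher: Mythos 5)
Your proof is correct and follows essentially the same route as the paper: instantiate Definition~\ref{Def:mon_4} at $u = \epsilon$ and unfold Definition~\ref{Def:umon_4}, the only cosmetic difference being that the paper establishes the ``no bad prefix'' conjunct by contradiction while you read it off directly from the ``not weakly negatively $\epsilon$-monitorable'' clause. Your remark that the neutral case must invoke the explicit $\epsilon$-condition of Definition~\ref{Def:mon_4} (rather than $u$-monitorability for all $u$) is exactly the right point of care.
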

\proof Let us show the first proposition.
(1) If $L$ is positively monitorable then $L$ is positively $\epsilon$-monitorable by Definition \ref{Def:mon_4}, which implies $\exists v \in \Sigma^*. \forall w \in \Sigma^\omega. vw \in L$ by Definition \ref{Def:umon_4}.
(2) Suppose $\exists u \in \Sigma^*. \forall w \in \Sigma^\omega. uw \not\in L$. Thus $L$ is not positively $\epsilon$-monitorable by Definition \ref{Def:umon_4}, which implies $L$ is not positively monitorable by Definition \ref{Def:mon_4}. A contradiction.

The second and third propositions can be shown similarly.
\qed

Let us consider some simple but essential examples regarding basic temporal connectives. More examples, such as the formulas used in Section \ref{Sec:intro}, will be considered in Section \ref{Sec:experiments}.
\begin{itemize}
\item Formula $\F p$ is positively monitorable, as any finite word can be extended to a good prefix (ended with $p$) but never a bad prefix. This means that only satisfaction, but no violation, of the property can be detected after some finite sequences of steps.
\item Formula $\G p$ is negatively monitorable, as any finite word can be extended to a bad prefix (ended with $\emptyset$) but never a good prefix. This means that only violation, but no satisfaction, of the property can be detected after some finite sequences of steps.
\item Formula $p \U q$ is neutrally monitorable, as it is monitorable and $\epsilon$ (more generally, any finite word of the form $p\cdots p$) can be extended to both a good prefix (ended with $q$) and a bad prefix (ended with $\emptyset$). This means that both of satisfaction and violation of the property can be detected after some finite sequences of steps.
\item Formula $\G \F p$ is non-monitorable, as any finite word can never be extended to a good or bad prefix, due to the infinite continuations $\emptyset\emptyset \cdots$ and $pp \cdots$ respectively. This means that neither satisfaction nor violation of the property can be detected.
\end{itemize}

\section{Computing four-valued monitorability of $\omega$-regular languages}
\label{Sec:mon_4:algorithm}

In this section, we propose a procedure for computing the four-valued monitorability of $\omega$-regular languages, based on the six-valued semantics.

The first step is a monitor construction procedure that transforms an LTL formula into a six-valued monitor, i.e., a deterministic FSM which outputs $\top$, $\bot$, $\mp$, $\pm$, $+$ and $\times$ after reading over good, bad, positive, negative, neutral and ugly prefixes respectively.
For example, in Figure \ref{Fig:ex_monitor}, states \verb|P1|, \verb|P2| and \verb|N| are all $?$ states under the three-valued semantics. After refining the output function with the six-valued semantics, states \verb|P1| and \verb|P2| become $\mp$ states, whereas state \verb|N| becomes a $\times$ state.

The construction procedure first constructs a three-valued monitor, using the traditional approach which requires 2ExpSpace~\cite{BauLS11}.
Then we refine its output function, assigning new outputs to $?$ states.
Specifically, our procedure traverses all the states in the monitor, and for each state, starts another nested traversal to check whether a $\top$ state or a $\bot$ state is reachable.
A $?$ state is assigned output
$\mp$ if $\top$ states are reachable but no $\bot$ state is,
$\pm$ if $\bot$ states are reachable but no $\top$ state is,
$+$ if both $\top$ and $\bot$ states are reachable, or
$\times$ if neither is reachable.
This refinement step can be done in polynomial time and NLSpace (using the three-valued monitor as the input).
Thus, constructing a six-valued monitor requires also 2ExpSpace.
Let us formalize the above construction procedure.
\begin{definition}
\label{Def:monitor_6}
Let $M = (Q$, $\Sigma$, $\delta$, $q_0$, $\mathbb{B}_3$, $\lambda_3)$ be a three-valued monitor for an $\omega$-language $L \subseteq \Sigma^\omega$.
The corresponding six-valued monitor $M' = (Q$, $\Sigma$, $\delta$, $q_0$, $\mathbb{B}_6$, $\lambda)$ is obtained by refining the output function $\lambda_3$ of $M$ as shown in Figure \ref{Fig:lambda_6}.
\begin{figure}[htb]
\vspace{-1em}
\begin{equation*}  
\text{for any } q \in Q, \lambda(q) = \left\{  
  \begin{array}{ll}  
  \top, & \text{if } \lambda_3(q) = \top \\  
  \bot, & \text{if } \lambda_3(q) = \bot \\  
  \mp,  & \text{if }
          \left\{
            \begin{array}{l}
            \lambda_3(q) \neq \top \\
            \exists v \in \Sigma^*.~\delta(q, v) = q' \wedge~ \lambda_3(q') = \top, \text{ and} \\
            \forall v \in \Sigma^*.~\delta(q, v) = q' \rightarrow \lambda_3(q') \neq \bot
            \end{array}
          \right. \\
  \pm,  & \text{if }
          \left\{
            \begin{array}{l}
            \lambda_3(q) \neq \bot \\
            \exists v \in \Sigma^*.~\delta(q, v) = q' \wedge~ \lambda_3(q') = \bot, \text{ and} \\
            \forall v \in \Sigma^*.~\delta(q, v) = q' \rightarrow \lambda_3(q') \neq \top
            \end{array}
          \right. \\
  +,    & \text{if }
          \left\{
            \begin{array}{l}
            \exists v \in \Sigma^*.~\delta(q, v) = q' \wedge~ \lambda_3(q') = \top, \text{ and} \\
            \exists v \in \Sigma^*.~\delta(q, v) = q' \wedge~ \lambda_3(q') = \bot
            \end{array}
          \right. \\
  \times, & \text{if }
            \left\{
            \begin{array}{l}
            \forall v \in \Sigma^*.~\delta(q, v) = q' \rightarrow \lambda_3(q') \neq \top, \text{ and} \\
            \forall v \in \Sigma^*.~\delta(q, v) = q' \rightarrow \lambda_3(q') \neq \bot
            \end{array}
            \right.
  \end{array}
\right.
\end{equation*}
\vspace{-1em}
\caption{The output function $\lambda$.}
\vspace{-1em}
\label{Fig:lambda_6}
\end{figure}
\end{definition}

We can show the following lemma, that the six-valued monitor can be used to compute the truth value of an $\omega$-language wrt. a finite word. This lemma can be proved by Definitions \ref{Def:val_6} and \ref{Def:val_3}, Lemma \ref{Lem:monitor_3} and Definition \ref{Def:monitor_6}.
\begin{lemma}
\label{Lem:monitor_6}
Let $M = (Q$, $\Sigma$, $\delta$, $q_0$, $\mathbb{B}_6$, $\lambda)$ be a six-valued monitor for an $\omega$-language $L \subseteq \Sigma^\omega$.
For any $u \in \Sigma^*$, $[u \models L]_6 = \lambda(\delta(q_0, u))$.
\end{lemma}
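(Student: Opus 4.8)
The plan is to prove Lemma~\ref{Lem:monitor_6} by reducing it to Lemma~\ref{Lem:monitor_3} via a case analysis on the three-valued output of the reached state $q := \delta(q_0, u)$, combined with the observation that the six-valued refinement in Definition~\ref{Def:monitor_6} is phrased in terms of states reachable from $q$, which correspond exactly to finite extensions $v$ of $u$. First I would fix $u \in \Sigma^*$ and set $q = \delta(q_0, u)$; by Lemma~\ref{Lem:monitor_3} we have $[u \models L]_3 = \lambda_3(q)$, so $\lambda_3(q) = \top$ iff $u$ is good and $\lambda_3(q) = \bot$ iff $u$ is bad. For the two conclusive cases the result is immediate: if $\lambda_3(q) = \top$ then $u$ is good, $[u \models L]_6 = \top$ by Definition~\ref{Def:val_6}, and $\lambda(q) = \top$ by Definition~\ref{Def:monitor_6}; symmetrically for $\bot$.

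The core of the argument is the bijection between finite extensions and reachable states: for every $v \in \Sigma^*$, $\delta(q_0, uv) = \delta(\delta(q_0,u), v) = \delta(q,v)$, so $uv$ is a good prefix iff $\lambda_3(\delta(q_0,uv)) = \top$ iff $\lambda_3(\delta(q,v)) = \top$, and likewise $uv$ is a bad prefix iff $\lambda_3(\delta(q,v)) = \bot$, both again by Lemma~\ref{Lem:monitor_3}. Using this equivalence, the quantified conditions $\exists v.\, \delta(q,v)=q' \wedge \lambda_3(q') = \top$ and $\forall v.\, \delta(q,v)=q' \rightarrow \lambda_3(q') \neq \bot$ appearing in the definition of $\lambda$ translate literally into ``$u$ has a good extension'' and ``$u$ has no bad extension'', which are exactly the clauses defining the positive/negative/neutral/ugly prefix types. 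So for the remaining cases, where $\lambda_3(q) = {?}$, i.e. $u$ is neither good nor bad, I would check each of the four branches of Figure~\ref{Fig:lambda_6} against the corresponding clause of the prefix definitions: $\lambda(q) = \mp$ precisely when $u$ is not good, has a good extension, and has no bad extension, which is the definition of a positive prefix, hence $[u \models L]_6 = \mp$; similarly $\pm$ matches negative prefixes, $+$ matches neutral prefixes, and $\times$ matches ugly prefixes. Since by Theorem~1 (the completeness theorem) these six cases are exhaustive and mutually disjoint, this establishes $[u \models L]_6 = \lambda(q)$ in all cases.

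One small subtlety to address is that the ``not good'' (resp. ``not bad'') side conditions $\lambda_3(q) \neq \top$ (resp. $\lambda_3(q) \neq \bot$) in the $\mp$ and $\pm$ branches are needed to separate these cases from the $\top$ and $\bot$ branches, and that in the $+$ branch having both a good and a bad extension already forces $u$ itself to be neither good nor bad (a good prefix has no bad extension and vice versa), so the case split in Figure~\ref{Fig:lambda_6} is genuinely well-defined as a function. I do not anticipate a serious obstacle here; the only mildly delicate point is making the ``reachable state $\leftrightarrow$ finite extension'' correspondence fully rigorous, i.e. invoking the standard fact that $\delta$ extended to words is a monoid action ($\delta(q_0, uv) = \delta(\delta(q_0,u), v)$), and then mechanically matching each definitional clause. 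The proof is essentially bookkeeping once that correspondence and Lemma~\ref{Lem:monitor_3} are in place.
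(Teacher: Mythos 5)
Your proof is correct and follows exactly the route the paper intends: the paper only remarks that the lemma ``can be proved by Definitions~\ref{Def:val_6} and \ref{Def:val_3}, Lemma~\ref{Lem:monitor_3} and Definition~\ref{Def:monitor_6}'', and your argument is precisely the expansion of that sketch, using $\delta(q_0,uv)=\delta(\delta(q_0,u),v)$ to translate each branch of Figure~\ref{Fig:lambda_6} into the corresponding prefix-type clause. No gaps.
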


As a property of the six-valued monitor, the following theorem shows that each state in a monitor can be reached by exactly one type of prefixes (by Lemma \ref{Lem:monitor_6} and Definition \ref{Def:val_6}).
\begin{theorem}
\label{Thm:monitor_6_prefix}
Let $M = (Q$, $\Sigma$, $\delta$, $q_0$, $\mathbb{B}_6$, $\lambda)$ be a six-valued monitor for an $\omega$-language $L \subseteq \Sigma^\omega$.
For a state $q \in Q$, $\lambda(q)$ equals $\top$, $\bot$, $\mp$, $\pm$, $+$ or $\times$, iff it can be reached by good, bad, positive, negative, neutral or ugly prefixes, respectively.
\end{theorem}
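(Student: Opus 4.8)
The plan is to obtain both directions of each of the six biconditionals directly from Lemma~\ref{Lem:monitor_6}, which identifies $\lambda(q)$ with the six-valued truth value $[u \models L]_6$ of \emph{any} finite word $u$ reaching $q$, together with Definition~\ref{Def:val_6}, which pins each truth value to exactly one prefix type. Throughout I assume (as the construction guarantees, or else after pruning unreachable states without altering $\delta$, $q_0$ or $\lambda$) that every $q \in Q$ is reachable, so that some $u \in \Sigma^*$ with $\delta(q_0, u) = q$ exists; this is the only place the statement's implicit well-formedness of $M$ is used.

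For the ``if'' direction I would argue as follows. Suppose $q$ is reached by a good prefix $u$, i.e.\ $\delta(q_0, u) = q$ and $u \in good(L)$. Then $[u \models L]_6 = \top$ by Definition~\ref{Def:val_6}, and Lemma~\ref{Lem:monitor_6} gives $\lambda(q) = \lambda(\delta(q_0,u)) = [u \models L]_6 = \top$. The remaining five cases follow by the same two lines, replacing $good(L)$ by $bad(L)$, $posi(L)$, $nega(L)$, $neut(L)$, $ugly(L)$ and $\top$ by $\bot$, $\mp$, $\pm$, $+$, $\times$ in turn. Conversely, I fix any $u$ with $\delta(q_0, u) = q$; then $[u \models L]_6 = \lambda(q)$ by Lemma~\ref{Lem:monitor_6}, so if $\lambda(q) = \top$ then $[u \models L]_6 = \top$ and hence $u \in good(L)$ by Definition~\ref{Def:val_6}, i.e.\ $q$ is reached by a good prefix; and analogously for the other five values. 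Since the six prefix classes are pairwise disjoint, this argument in fact shows the slightly stronger fact that \emph{every} word reaching $q$ has the single type dictated by $\lambda(q)$.

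I do not expect a genuine obstacle here: the proof is essentially a one-step substitution into Lemma~\ref{Lem:monitor_6} and Definition~\ref{Def:val_6}. The only points needing a word of care are the reachability of $q$ invoked in the ``only if'' direction (which is why we may restrict attention to the reachable sub-automaton of $M$) and verifying, in each of the six cases, that the word handed to Lemma~\ref{Lem:monitor_6} is one that actually reaches $q$. Everything else is bookkeeping across the six truth values.
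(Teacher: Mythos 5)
Your proposal is correct and follows essentially the same route as the paper, which justifies this theorem in one line as a direct consequence of Lemma~\ref{Lem:monitor_6} and Definition~\ref{Def:val_6}. Your additional care about the reachability of $q$ (and the observation that all words reaching $q$ share the same prefix type) is a reasonable elaboration of the same argument, not a different approach.
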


Based on the six-valued monitor, the second step determines the four-valued monitorability of an $\omega$-language $L$ by checking whether its monitor has some specific reachable states.
The monitorability of $L$ is
$\POSIMON$ iff neither $\times$ nor $\bot$ states are reachable (thus neither $\pm$ nor $+$ states are reachable),
$\NEGAMON$ iff neither $\times$ nor $\top$ states are reachable (thus neither $\mp$ nor $+$ states are reachable),
$\NEUTMON$ iff no $\times$ state is reachable but a $+$ state is reachable (thus both $\top$ and $\bot$ states are reachable), and
$\NONMON$ iff a $\times$ state is reachable.
These rules can be formalized:
\begin{theorem}
Let $M = (Q$, $\Sigma$, $\delta$, $q_0$, $\mathbb{B}_6$, $\lambda)$ be a six-valued monitor for an $\omega$-language $L \subseteq \Sigma^\omega$.
The monitorability of $L$:
\begin{equation*}  
\eta(L) = \left\{  
  \begin{array}{ll}
  \POSIMON, &\text{iff }
            \forall u \in \Sigma^*.~\delta(q_0, u) = q' \rightarrow \lambda(q') \neq \times \wedge \lambda(q') \neq \bot \\
  \NEGAMON, &\text{iff }
            \forall u \in \Sigma^*.~\delta(q_0, u) = q' \rightarrow \lambda(q') \neq \times \wedge \lambda(q') \neq \top \\
  \NEUTMON, &\text{iff }
            \left\{
            \begin{array}{l}
            \forall u \in \Sigma^*.~\delta(q_0, u) = q' \rightarrow \lambda(q') \neq \times, \text{ and} \\
            \exists u \in \Sigma^*.~\delta(q_0, u) = q' \wedge~ \lambda(q') = +
            \end{array}
            \right. \\
  \NONMON, &\text{iff }
           \exists u \in \Sigma^*.~\delta(q_0, u) = q' \wedge~ \lambda(q') = \times
  \end{array}  
\right.
\end{equation*}
\end{theorem}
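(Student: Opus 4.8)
The plan is to reduce every clause of the statement to a statement about which labelled states are reachable from $q_0$, using Lemma~\ref{Lem:monitor_6} (so that $\lambda(\delta(q_0,u))=[u\models L]_6$ for every $u$) together with Theorem~\ref{Thm:monitor_6_prefix} (so that the six labels correspond exactly to the six prefix types), and then to unfold Definitions~\ref{Def:mon_4} and~\ref{Def:umon_4}.

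First I would fix $u\in\Sigma^*$, write $q'=\delta(q_0,u)$, and translate Definition~\ref{Def:umon_4} directly from the case split defining $\lambda$ in Figure~\ref{Fig:lambda_6}: $L$ is weakly positively (resp.\ weakly negatively) $u$-monitorable iff $\lambda(q')\in\{\top,\mp,+\}$ (resp.\ $\lambda(q')\in\{\bot,\pm,+\}$), hence $L$ is positively $u$-monitorable iff $\lambda(q')\in\{\top,\mp\}$, negatively $u$-monitorable iff $\lambda(q')\in\{\bot,\pm\}$, neutrally $u$-monitorable iff $\lambda(q')=+$, and not $u$-monitorable iff $\lambda(q')=\times$. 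I would also record two monotonicity facts that read off immediately from Figure~\ref{Fig:lambda_6}: any $\pm$ or $+$ state has a $\bot$ state reachable from it, and any $\mp$ or $+$ state has a $\top$ state reachable from it.

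Then I would handle the four cases. By Definition~\ref{Def:mon_4}, $L$ is positively monitorable iff every reachable state is labelled $\top$ or $\mp$, i.e.\ no reachable state is labelled $\bot$, $\pm$, $+$ or $\times$; by the first monotonicity fact this is equivalent to the stated condition that no reachable state is labelled $\bot$ or $\times$ (forbidding a reachable $\bot$ already forbids reachable $\pm$ and $+$), which also proves the parenthetical remark. The negative case is symmetric via the second monotonicity fact, and the non-monitorable case is read off directly from ``$L$ is not $u$-monitorable for some $u$''. For the neutral case, Definition~\ref{Def:mon_4} requires both $u$-monitorability at every $u$ --- equivalently, by the translation, no reachable state labelled $\times$ --- and neutral $\epsilon$-monitorability, which by the translation at $u=\epsilon$ means exactly $\lambda(q_0)=+$. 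It then remains to see that $\lambda(q_0)=+$ is equivalent to ``some reachable state is labelled $+$'': the forward direction is the trivial choice $u=\epsilon$, and conversely, if $\delta(q_0,u)$ is a $+$ state then a $\top$ state and a $\bot$ state are both reachable from $q_0$ through it, so $L$ is both weakly positively and weakly negatively $\epsilon$-monitorable, hence neutrally $\epsilon$-monitorable, hence $\lambda(q_0)=+$. Once all four equivalences hold, the four right-hand-side conditions are automatically mutually exclusive and jointly exhaustive, since the four values partition the $\omega$-languages by Definition~\ref{Def:mon_4}, so $\eta(L)$ is well defined.

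The part needing the most care is the bookkeeping behind the parenthetical ``thus \ldots'' clauses: these must be derived from the shape of $\lambda$ in Figure~\ref{Fig:lambda_6} rather than assumed, and in the neutral case the ``no reachable $\times$ state'' condition must be kept genuinely separate from the ``reachable $+$ state'' condition. Everything else is a routine unfolding of the definitions.
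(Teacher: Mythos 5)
Your proposal is correct and follows essentially the same route as the paper's proof: reduce $\lambda(q')$ to $[u\models L]_6$ via Lemma~\ref{Lem:monitor_6}, observe that forbidding reachable $\bot$ (resp.\ $\top$) states already forbids reachable $\pm$ and $+$ (resp.\ $\mp$ and $+$) states, and then unfold Definitions~\ref{Def:umon_4} and~\ref{Def:mon_4}. You are in fact somewhat more complete than the paper, which only works out the $\POSIMON$ case and one direction in detail; your explicit treatment of the neutral case (the equivalence between $\lambda(q_0)=+$ and the reachability of some $+$ state) fills a step the paper leaves implicit.
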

\begin{proof}
($\Leftarrow$) This implication can be proved by Lemma \ref{Lem:monitor_6} (converting $\lambda(q')$ into $[u \models L]_6$), Definitions \ref{Def:val_6} and \ref{Def:mon_4} (matching $[u \models L]_6$ to monitorability).

For example, let us show the first case. Suppose $\forall u \in \Sigma^*.$ $\delta(q_0, u) = q'$ $\rightarrow$ $\lambda(q') \neq \times$ $\wedge$ $\lambda(q') \neq \bot$. Lemma \ref{Lem:monitor_6} ensures that $\delta(q_0, u)$ must be defined for every $u$ and $[u \models L]_6$ = $ \lambda(\delta(q_0, u))$. Thus $\forall u \in \Sigma^*.$ $[u \models L]_6  \neq \times$ $\wedge$ $[u \models L]_6 \neq \bot$. Note that $\forall u \in \Sigma^*.$ $[u \models L]_6  \neq \pm$ $\wedge$ $[u \models L]_6 \neq +$ (otherwise, $\exists v \in \Sigma^*.$ $[uv \models L]_6 = \bot$ by Definition \ref{Def:val_6}, a contradiction). As a result, $\forall u \in \Sigma^*.$ $[u \models L]_6 \in \{ \top, \mp \}$. Thus $\eta(L) = \POSIMON$ by Definitions \ref{Def:val_6} and \ref{Def:mon_4}.

($\Rightarrow$) This implication follows by the converse of the above reasoning.
\end{proof}

The above checking procedure can be done in linear time and NLSpace by traversing all the states of monitor. However, note that this procedure is performed after constructing the monitor. Thus, when an $\omega$-regular language $L$ is given in terms of an LTL formula, the four-valued monitorability of $L$ can be computed in 2ExpSpace; the same complexity as for two-valued monitorability. As we will see in Section \ref{Sec:experiments}, the small size of standard LTL patterns means that four-valued monitorability can be computed in very little time

Now consider other representations of $\omega$-regular languages.
If $L$ is given in terms of an NBA, we first explicitly complement the NBA, and the rest of the procedure stays the same. However, the complement operation also involves an exponential blowup.
If $L$ is given in terms of an $\omega$-regular expression, we first build an NBA for the expression, which can be done in polynomial time, and the rest of the procedure is the same as for NBA.
Hence, independent of the concrete representation, four-valued monitorability of an $\omega$-regular language can be computed in 2ExpSpace, by using the monitor-based procedure.

\section{State-level four-valued monitorability}
\label{Sec:mon_state}

In this section, we apply four-valued monitorability at the state-level, to predict whether satisfaction and violation can be detected \emph{starting from a given state in a monitor}.
Recall that the notions of monitorability (cf. Definitions \ref{Def:mon} and \ref{Def:mon_4}) are defined using the extensions to good and bad prefixes.
However, good and bad prefixes are defined for an $\omega$-language, not for a state.
Thus such definitions cannot be directly applied at the state-level.
Instead, we define state-level monitorability using the reachability of $\top$ and $\bot$ states, which are equivalent notions to good and bad prefixes according to Theorem \ref{Thm:monitor_6_prefix}.

\begin{definition}[State-level four-valued monitorability]
\label{Def:mon_state}
Let $M = (Q, \Sigma, \delta, q_0, \mathbb{B}_6, \lambda)$ be a six-valued monitor. A state $q \in Q$ is
\begin{itemize}
\item \emph{positively monitorable} if a $\top$ state but no $\bot$ state is reachable from $q'$, for every state $q' \in Q$ reachable from $q$.
\item \emph{negatively monitorable} if a $\bot$ state but no $\top$ state is reachable from $q'$, for every state $q' \in Q$ reachable from $q$.
\item \emph{neutrally monitorable} if a $\top$ state or a $\bot$ state is reachable from $q'$, for every state $q' \in Q$ reachable from $q$, and both a $\top$ state and a $\bot$ state are reachable from $q$.
\item \emph{non-monitorable} if neither $\top$ states nor $\bot$ states are reachable from $q'$, for some state $q' \in Q$ reachable from $q$.
\end{itemize}
\end{definition}
Note that the above state-level monitorability is too strong to meet our requirements, because it places restrictions on all the states reachable from the considered state $q$.
For example, in Figure \ref{Fig:ex_monitor}, we require discriminating states \verb|P1| and \verb|P2| from state \verb|N|, as satisfaction can be detected starting from \verb|P1| and \verb|P2|, but neither satisfaction nor violation can be detected starting from \verb|N|.
However, \verb|P1|, \verb|P2| and \verb|N| are all non-monitorable as neither $\top$ states nor $\bot$ states are reachable from \verb|N|.
To provide the required distinction, we should use a weaker form of state-level monitorability as follows.

\begin{definition}[State-level four-valued weak monitorability]
\label{Def:mon_weak_state}
Let $M = (Q, \Sigma, \delta, q_0, \mathbb{B}_6, \lambda)$ be a six-valued monitor. A state $q \in Q$ is
\begin{itemize}
\item \emph{weakly positively monitorable} if a $\top$ state but no $\bot$ state is reachable from $q$.
\item \emph{weakly negatively monitorable} if a $\bot$ state but no $\top$ state is reachable from $q$.
\item \emph{weakly neutrally monitorable} if both a $\top$ state and a $\bot$ state are reachable from $q$.
\item \emph{weakly non-monitorable} if neither $\top$ states nor $\bot$ states are reachable from $q$.
\end{itemize}
A state is \emph{weakly monitorable}, iff it is weakly positively, negatively or neutrally monitorable.
\end{definition}
For example, in Figure \ref{Fig:ex_monitor}, states \verb|P1|, \verb|P2| and \verb|T| are all weakly positively monitorable as \verb|T| is a reachable $\top$ state, while state \verb|N| is weakly non-monitorable. Thus, states \verb|P1| and \verb|P2| can now be discriminated from state \verb|N|.

We can validate that state-level four-valued weak monitorability can indeed predict whether satisfaction and violation can be detected \emph{starting from a given state}, as anticipated in Section \ref{Sec:intro}, by showing the following theorem, that the truth values $\POSIMON$, $\NEGAMON$, $\NEUTMON$ and $\NONMON$ indicate that only satisfaction, only violation, both and neither can be detected, respectively.
This theorem can be proved by Definition \ref{Def:mon_weak_state} and Theorem \ref{Thm:monitor_6_prefix}.
\begin{theorem}
Let $M = (Q, \Sigma, \delta, q_0, \mathbb{B}_6, \lambda)$ be a six-valued monitor.
Suppose a state $q \in Q$ can be reached from $q_0$ by reading over $u \in \Sigma^*$, i.e., $\delta(q_0, u) = q$. If $q$ is
\begin{itemize}
\item weakly $\POSIMON$ then $\exists v \in \Sigma^*. \forall w \in \Sigma^\omega. uvw \in L$ and $\not\exists v \in \Sigma^*. \forall w \in \Sigma^\omega. uvw \not\in L$.
\item weakly $\NEGAMON$ then $\exists v \in \Sigma^*. \forall w \in \Sigma^\omega. uvw \not\in L$ and $\not\exists v \in \Sigma^*. \forall w \in \Sigma^\omega. uvw \in L$.
\item weakly $\NEUTMON$ then $\exists v \in \Sigma^*. \forall w \in \Sigma^\omega. uvw \in L$ and $\exists v \in \Sigma^*. \forall w \in \Sigma^\omega. uvw \not\in L$.
\item weakly $\NONMON$ then $\not\exists v \in \Sigma^*. \forall w \in \Sigma^\omega. uvw \in L$ and $\not\exists v \in \Sigma^*. \forall w \in \Sigma^\omega. uvw \not\in L$.
\end{itemize}
\end{theorem}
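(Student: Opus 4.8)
The plan is to reduce each of the four clauses to a statement about the reachability of $\top$ and $\bot$ states from $q$, and then to translate those reachability facts into the existence of good and bad extensions of $u$ by invoking the earlier characterisations. Since the conclusions are stated as implications, it suffices to establish the two governing equivalences below; the four clauses then drop out by pure case analysis (and in fact one obtains ``iff'' in each case).

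First I would use determinism and totality of the monitor. From $\delta(q_0,u)=q$ and the fact that $\delta$ is a function, for every $v \in \Sigma^*$ we have $\delta(q_0,uv) = \delta(\delta(q_0,u),v) = \delta(q,v)$, so the set of states reachable from $q$ is exactly $\{\delta(q_0,uv) : v \in \Sigma^*\}$. Hence a $\top$ state is reachable from $q$ iff there is $v \in \Sigma^*$ with $\lambda(\delta(q_0,uv)) = \top$, which by Lemma~\ref{Lem:monitor_6} means $[uv \models L]_6 = \top$, which by Definition~\ref{Def:val_6} (equivalently by Theorem~\ref{Thm:monitor_6_prefix}) holds iff $uv$ is a good prefix for $L$, i.e. $\forall w \in \Sigma^\omega.\, uvw \in L$. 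The symmetric argument handles $\bot$. This yields the two key equivalences:
\begin{itemize}
\item a $\top$ state is reachable from $q$ $\iff$ $\exists v \in \Sigma^*.\, \forall w \in \Sigma^\omega.\, uvw \in L$;
\item a $\bot$ state is reachable from $q$ $\iff$ $\exists v \in \Sigma^*.\, \forall w \in \Sigma^\omega.\, uvw \notin L$.
\end{itemize}

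With these in hand the theorem is immediate from Definition~\ref{Def:mon_weak_state}. If $q$ is weakly $\POSIMON$ then a $\top$ state but no $\bot$ state is reachable from $q$, which by the equivalences is precisely ``$\exists v.\,\forall w.\,uvw \in L$ and $\not\exists v.\,\forall w.\,uvw \notin L$''; the weakly $\NEGAMON$ and weakly $\NONMON$ cases follow by the same translation with the roles of $\top/\bot$ (and the negations) swapped, and the weakly $\NEUTMON$ case is just the conjunction of the two positive reachability facts. The only mildly delicate step — and the one I would write out in full — is the pair of equivalences above, since it is where determinism and totality of $\delta$ are actually used to put extensions of $u$ read from $q_0$ in bijection with words read from $q$; after that, the rest is bookkeeping over the six truth values. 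I do not anticipate any genuine obstacle.
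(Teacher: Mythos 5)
Your proof is correct and takes essentially the same route as the paper, which proves this theorem from Definition~\ref{Def:mon_weak_state} together with Theorem~\ref{Thm:monitor_6_prefix}: reduce the weak monitorability of $q$ to reachability of $\top$/$\bot$ states and then translate that reachability into the existence of good/bad extensions of $u$. Your write-up merely makes explicit the determinism/totality step ($\delta(q_0,uv)=\delta(q,v)$) and the detour through Lemma~\ref{Lem:monitor_6} that the paper leaves implicit.
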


The four truth values can be used in state-level optimizations of monitoring algorithms:
\begin{itemize}
\item If a state is weakly positively (resp. negatively) monitorable, then a monitor object can be safely removed when it enters this state, provided that only violation (resp. satisfaction) handlers are specified, as no handler can be triggered.
\item If a state is weakly neutrally monitorable, then a monitor object must be preserved if it is at this state as both satisfaction and violation can be detected after some continuations.
\item If a state is weakly non-monitorable, then a monitor object can be safely removed when it enters this state as no verdict can be detected after any continuation.
\end{itemize}
Besides, a monitor object can also be removed when it enters a $\top$ state or a $\bot$ state, as any finite or infinite continuation yields the same verdict.

Let us consider the relationship between the language-level monitorability and the state-level weak monitorability. The following lemma shows that the monitorability of an $\omega$-language depends on the weak monitorability of all the reachable states of its monitor.
This means, if an $\omega$-language is non-monitorable, then the state space of its monitor may consist of both weakly monitorable and weakly non-monitorable states.
\begin{lemma}
\label{Lem:mon__mon_weak}
Let $M = (Q$, $\Sigma$, $\delta$, $q_0$, $\mathbb{B}_6$, $\lambda)$ be a six-valued monitor for an $\omega$-language $L \subseteq \Sigma^\omega$.
$L$ is monitorable iff every reachable state of $M$ is weakly monitorable.
\end{lemma}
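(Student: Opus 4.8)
The plan is to prove both implications by translating between the language-level notion of $u$-monitorability (Definition~\ref{Def:mon}) and the reachability of $\top$/$\bot$ states in the monitor, using Lemma~\ref{Lem:monitor_6} as the bridge. The key observation is that a state $q$ being \emph{weakly monitorable} (Definition~\ref{Def:mon_weak_state}) is precisely the condition that a $\top$ state or a $\bot$ state is reachable from $q$ --- this is exactly the disjunction of the three positive cases (weakly positively, negatively, neutrally monitorable) and the negation of the weakly non-monitorable case. A second, routine observation is that, since $\delta$ is a total deterministic transition function, the set of reachable states of $M$ coincides with $\{\delta(q_0, u) : u \in \Sigma^*\}$, so quantifying universally over reachable states and over finite words $u \in \Sigma^*$ are interchangeable.

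For the forward direction, I would assume $L$ is monitorable and take an arbitrary reachable state $q$, say $q = \delta(q_0, u)$. Monitorability yields some $v \in \Sigma^*$ with $uv$ a good or bad prefix for $L$. By Lemma~\ref{Lem:monitor_6}, $\lambda(\delta(q,v)) = \lambda(\delta(q_0, uv)) = [uv \models L]_6$, which is $\top$ if $uv$ is good and $\bot$ if $uv$ is bad (Definition~\ref{Def:val_6}). Hence a $\top$ state or a $\bot$ state is reachable from $q$, so $q$ is weakly monitorable.

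For the backward direction, I would assume every reachable state of $M$ is weakly monitorable and take an arbitrary $u \in \Sigma^*$. Then $\delta(q_0, u)$ is reachable, hence weakly monitorable, so there is $v \in \Sigma^*$ with $\delta(\delta(q_0,u),v) = \delta(q_0, uv)$ a $\top$ or $\bot$ state. Again by Lemma~\ref{Lem:monitor_6} and Definition~\ref{Def:val_6}, $[uv \models L]_6 \in \{\top,\bot\}$, so $uv$ is a good or bad prefix, i.e.\ $L$ is $u$-monitorable. Since $u$ was arbitrary, $L$ is monitorable by Definition~\ref{Def:mon}.

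The proof is essentially a direct unfolding of the definitions through Lemma~\ref{Lem:monitor_6}, so there is no deep obstacle; the only point requiring a little care --- the ``hard part,'' such as it is --- is keeping the quantifiers aligned, namely confirming that reachable states are exactly the states $\delta(q_0,u)$ and that totality of $\delta$ guarantees $\delta(q_0,u)$ is always defined, so that no finite prefix $u$ is lost when passing between the word view and the state view.
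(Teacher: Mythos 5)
Your proposal is correct and follows essentially the same route as the paper's proof: both directions are obtained by unfolding Definition~\ref{Def:mon} and Definition~\ref{Def:mon_weak_state} through Lemma~\ref{Lem:monitor_6}, identifying reachable states with $\delta(q_0,u)$ and observing that weak monitorability of a state is exactly reachability of a $\top$ or $\bot$ state from it. The paper merely states the backward direction as ``the converse of the above reasoning,'' which you spell out explicitly; nothing substantive differs.
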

\begin{proof}
($\Rightarrow$) Suppose $L$ is monitorable.
By Definition \ref{Def:mon}, $\forall u \in \Sigma^*$, $\exists v \in \Sigma^*$, s.t. $uv$ is a good or bad prefix.
By Definition \ref{Def:val_6}, this means $\forall u \in \Sigma^*.$ $\exists v \in \Sigma^*.$ $[uv \models L]_6 = \top$ $\vee$ $[uv \models L]_6 = \bot$.
By Lemma \ref{Lem:monitor_6}, we have $\forall u \in \Sigma^*.$ $\exists v \in \Sigma^*.$ $\lambda(\delta(q_0, uv)) = \top$ $\vee$ $\lambda(\delta(q_0, uv)) = \bot$.
Let $\delta(q_0, u) = q$.
It follows $\forall u \in \Sigma^*.$ $\delta(q_0, u) = q$ $\wedge$ $\exists v \in \Sigma^*.$ $\lambda(\delta(q, v)) = \top$ $\vee$ $\lambda(\delta(q, v)) = \bot$. That is, for every reachable state $q$, a $\top$ state or a $\bot$ state is reachable from $q$. Thus every reachable state of $M$ is weakly monitorable by Definition \ref{Def:mon_weak_state}.

($\Leftarrow$) This implication follows by the converse of the above reasoning.
\end{proof}

Let us consider how one can compute the state-level four-valued weak monitorability for each state in a six-valued monitor.
We first formalize a mapping from truth values to weak monitorability, and then show that the state-level weak monitorability can be quickly computed from the output of the state.

\begin{definition}[Value-to-weak-monitorability]
\label{Def:vtom}
Let $\texttt{vtom}: \mathbb{B}_6 \mapsto \mathbb{M}_4$ be the \emph{value-to-weak-monitorability operator} that converts a truth value in $\mathbb{B}_6$ into the corresponding result of weak monitorability in $\mathbb{M}_4 = \{ \POSIMON, \NEGAMON, \NEUTMON, \NONMON \}$, defined as follows: $\vtom{\top} = \vtom{\mp} = \POSIMON$, $\vtom{\bot} = \vtom{\pm} = \NEGAMON$, $\vtom{+} = \NEUTMON$ and $\vtom{\times} = \NONMON$.
\end{definition}

\begin{theorem}
\label{Thm:vtom_state}
Let $M = (Q$, $\Sigma$, $\delta$, $q_0$, $\mathbb{B}_6$, $\lambda)$ be a six-valued monitor for an $\omega$-language $L \subseteq \Sigma^\omega$.
The four-valued weak monitorability of $q \in Q$ equals $\vtom{\lambda(q)}$.
\end{theorem}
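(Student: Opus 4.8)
The plan is to prove Theorem~\ref{Thm:vtom_state} by a straightforward case analysis on the value $\lambda(q)$, showing in each of the six cases that the defining condition of the corresponding weak-monitorability class (Definition~\ref{Def:mon_weak_state}) is exactly captured by $\vtom{\lambda(q)}$. The central tool is Theorem~\ref{Thm:monitor_6_prefix} together with the explicit refinement rules for $\lambda$ in Figure~\ref{Fig:lambda_6}, which already phrase the six cases in terms of reachability of $\top$ and $\bot$ states from $q$ via $\delta(q,\cdot)$.

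First I would unfold the definitions: by Definition~\ref{Def:mon_weak_state}, ``$q$ is weakly positively monitorable'' means a $\top$ state but no $\bot$ state is reachable from $q$; ``weakly negatively'' means a $\bot$ but no $\top$; ``weakly neutrally'' means both are reachable; ``weakly non-monitorable'' means neither is. Then I would handle the four $?$-derived values first, since Figure~\ref{Fig:lambda_6} makes these immediate: $\lambda(q)=\mp$ holds precisely when $\lambda_3(q)\neq\top$, some $\delta(q,v)$ has output $\top$, and no $\delta(q,v)$ has output $\bot$ --- i.e.\ a $\top$ state is reachable from $q$ and no $\bot$ state is, which is exactly weak positive monitorability; and $\vtom{\mp}=\POSIMON$. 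Symmetrically $\lambda(q)=\pm$ corresponds to weak negative monitorability, matching $\vtom{\pm}=\NEGAMON$; $\lambda(q)=+$ corresponds to both $\top$ and $\bot$ reachable, i.e.\ weak neutral monitorability, matching $\vtom{+}=\NEUTMON$; and $\lambda(q)=\times$ corresponds to neither reachable, i.e.\ weak non-monitorability, matching $\vtom{\times}=\NONMON$.

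The two remaining cases are $\lambda(q)=\top$ and $\lambda(q)=\bot$, where a small observation is needed: if $\lambda(q)=\top$ then $q$ is itself a $\top$ state (reachable from $q$ via the empty word $\epsilon$), so a $\top$ state is reachable from $q$. It remains to argue no $\bot$ state is reachable from $q$; this follows because by Theorem~\ref{Thm:monitor_6_prefix} $q$ is reached only by good prefixes $u$, and any continuation $uv$ of a good prefix is again a good prefix, so $\delta(q,v)$ is always a $\top$ state and never a $\bot$ state. Hence $q$ is weakly positively monitorable, and $\vtom{\top}=\POSIMON$. The case $\lambda(q)=\bot$ is symmetric via bad prefixes, giving weak negative monitorability and $\vtom{\bot}=\NEGAMON$. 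Since the six values are exhaustive and the classes of Definition~\ref{Def:mon_weak_state} are mutually exclusive and exhaustive, in every case the weak monitorability of $q$ equals $\vtom{\lambda(q)}$.

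I do not expect a serious obstacle here; the only point requiring care is the $\top$/$\bot$ cases, where one must justify that these states cannot reach a state of the opposite polarity. This uses the absorption property of good and bad prefixes under extension (a continuation of a good prefix is good, of a bad prefix is bad) combined with Theorem~\ref{Thm:monitor_6_prefix}; alternatively, one can appeal directly to the well-formedness of the underlying three-valued monitor, in which $\top$ and $\bot$ states are sinks. Stating this absorption fact explicitly (or citing it as an immediate consequence of Definition of good/bad prefixes) is the one place where the proof needs more than a mechanical appeal to Figure~\ref{Fig:lambda_6}.
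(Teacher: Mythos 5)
Your proof is correct and takes essentially the same approach as the paper: a case analysis over the six values of $\lambda(q)$, translating each into a reachability condition on $\top$/$\bot$ states via Theorem~\ref{Thm:monitor_6_prefix} (and Figure~\ref{Fig:lambda_6}) and matching it against Definitions~\ref{Def:mon_weak_state} and~\ref{Def:vtom}. You are in fact somewhat more careful than the paper's terse argument, explicitly justifying the $\top$/$\bot$ cases via the absorption of good/bad prefixes under extension, a detail the paper leaves implicit.
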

\begin{proof}
On the one hand, consider all possible values of $\lambda(q)$ in $\mathbb{B}_6$.
For each possible value, we can find the corresponding type of prefixes reaching $q$ by Theorem \ref{Thm:monitor_6_prefix}. We know whether these prefixes can be extended to good and bad prefixes, i.e., to reach $\top$ and $\bot$ states, again by Theorem \ref{Thm:monitor_6_prefix}. Thus the four-valued weak monitorability of $q$ can be inferred by Definition \ref{Def:mon_weak_state}.
On the other hand, $\vtom{\lambda(q)}$ can be computed by Definition \ref{Def:vtom}.
It is easy to validate that they are always equal.
\end{proof}

\section{Implementation and experimental results}
\label{Sec:experiments}

We have developed a new tool, \MONIC, that implements the proposed procedure for computing four-valued monitorability of LTL formulas. \MONIC also supports deciding two-valued monitorability.
We have evaluated its effectiveness using a set of LTL formulas, including formulas $\phi_1$ to $\phi_6$ (used in Section \ref{Sec:intro}) and Dwyer et al.'s 97 LTL patterns \cite{DwyAC99,BauLS11}.
The evaluation was performed on an ordinary laptop, equipped with an Intel Core i7-6500U CPU (at 2.5GHz), 4GB RAM and Ubuntu Desktop (64-bit).

The result on formulas $\phi_1$ to $\phi_6$ shows that:
$\phi_1$ is neutrally monitorable,
$\phi_2$ is non-monitorable,
$\phi_3$ is positively monitorable,
$\phi_4$ is negatively monitorable,
$\phi_5$ is neutrally monitorable, and
$\phi_6$ is non-monitorable (but weakly monitorable).
Thus, the violation of $\phi_3$ and the satisfaction of $\phi_4$ can never be detected, whereas both verdicts are active for $\phi_1$ and $\phi_5$.
Further, $\phi_4$ and $\phi_5$ can be discriminated by their different monitorability results.

We also ran \MONIC on Dwyer et al.'s specification patterns \cite{DwyAC99,BauLS11}, of which 97 are well-formed LTL formulas.
The result shows that 55 formulas are monitorable and 42 are non-monitorable.
For those monitorable ones, 6 are positively monitorable, 40 are negatively monitorable and 9 are neutrally monitorable.
Our result disagrees with the two-valued result reported in \cite{BauLS11} only on the 6th LTL formula listed in the Appendix of \cite{BauLS11}. More precisely, \MONIC reports negatively monitorable, whereas the result in \cite{BauLS11} is non-monitorable.
The formula is as follows (\verb|!| for $\neg$, \verb|&| for $\wedge$, \verb&|& for $\vee$, \verb|->| for $\rightarrow$, \verb|U| for $\U$, \verb|<>| for $\F$, \verb|[]| for $\G$):
\begin{verbatim}
[](("call" & <>"open") ->
   ((!"atfloor" & !"open") U
    ("open" | (("atfloor" & !"open") U
               ("open" | ((!"atfloor" & !"open") U
                          ("open" | (("atfloor" & !"open") U
                          ("open" | (!"atfloor" U "open")))))))))) 
\end{verbatim}
The result in \cite{BauLS11} is unreliable as it is based on manual inspection of monitors and no tool is implemented in that work.
To validate, a manual inspection of its monitor (in Figure \ref{Fig:pattern_monitor}) shows that our result is correct. Indeed, state \verb|F| is a $\bot$ state, and states \verb|N1| to \verb|N7| are all $\pm$ states that can reach the $\bot$ state \verb|F|.

\begin{figure}[htb]
\centering
\includegraphics[scale=0.25]{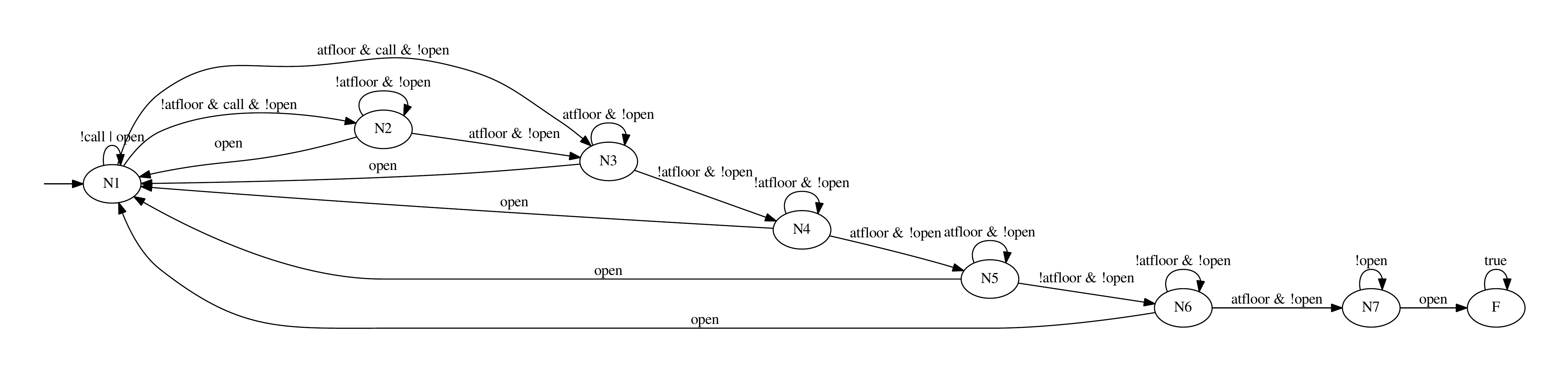}
\vspace{-1em}
\caption{The monitor of an LTL pattern.}
\label{Fig:pattern_monitor}
\end{figure}

Finally, the above results for $\phi_1$ to $\phi_6$ and the 97 LTL patterns were computed in 0.03 and 0.07 seconds, with 16MB and 20MB memory consumed, respectively (all reported by GNU \verb|time|).
To conclude, the results show that \MONIC can correctly report both two-valued and four-valued monitorability of typical formulas in very little time.

\section{Related work}

Monitorability is a principal foundational question in RV because it delineates which properties can be monitored at runtime.
The classical results on monitorability have been established for $\omega$-languages, especially for LTL \cite{PnuZ06,BauLS11,Bau10}.
Francalanza and Aceto et al. have studied monitorability for the Hennessy-Milner logic with recursion, both with a branching-time semantics \cite{Fra16,FraAA17,FraAI17,AceAF18} and with a linear-time semantics \cite{AceAF19a}.
There exist some variants of monitorability as well.
For example, monitorability has been considered over unreliable communication channels which may reorder or lose events \cite{KauHF19}.
However, all of the existing works only consider two-valued notions of  monitorability at the language-level.

Monitorability has been studied in other contexts.
For example, a topological viewpoint~\cite{DieL14} and the correspondence between monitorability and the classifications of properties (e.g., the safety-progress and safety-liveness classifications) \cite{FalFM09,FalFM12,PelH19} have been established.
A hierarchy of monitorability definitions (including monitorability and weak monitorability~\cite{ChenWW18}) has been defined wrt. the operational guarantees provided by monitors \cite{AceAF19b}.

A four-valued semantics for LTL \cite{BauLS07,BauLS10} has been proposed to refine the three-valued semantics~\cite{BauLS11}.
It divides the inconclusive truth value ? into two values: \emph{currently true} and \emph{currently false}, i.e., whether the finite sequence observed so far satisfies the property based on a finite semantics for LTL.
Note that it provides more information on what has already been seen, whereas our six-valued semantics describes what verdicts can be detected in the future continuation.

\section{Conclusion}

We have proposed four-valued monitorability and the corresponding computation procedure for $\omega$-regular languages. Then we applied the four-valued notion at the state-level.
To our knowledge, this is the first study of multi-valued monitorability, inspired by practical requirements from RV.
We believe that our work and implementation can be integrated into RV tools to provide information at the development stage and thus avoid the development of unnecessary handlers and the use of monitoring that cannot add value, enhance property debugging, and enable state-level optimizations of monitoring algorithms.



\bibliography{references}

\end{document}